
\documentclass[10pt,journal,compsoc]{IEEEtran}
%


%

%
\ifCLASSOPTIONcompsoc
  \usepackage[nocompress]{cite}
\else
  \usepackage{cite}
\fi
%

%
\ifCLASSINFOpdf
\usepackage[pdftex]{graphicx}
\else
\usepackage[dvips]{graphicx}
\fi
%
%

%
\usepackage{amsmath}
\usepackage{url}

\usepackage{amssymb}
\usepackage{amsfonts}
\usepackage{amsthm}
\usepackage{framed}

\usepackage[ruled,vlined,linesnumbered]{algorithm2e}

\usepackage{xcolor}

\def\proof{{{\em Proof:} \/}}
\def\qed{{$\hfill \Box$\\}}

\newcommand {\ol}[1]{\overline{#1}}

\newcommand{\spell}{\mathsf{sp}}
\newcommand{\ed}{\mathsf{ed}}
\newcommand{\as}{\mathsf{as}}

\newcommand{\gap}{\text{`$-$'}\xspace}

\newtheorem{Teo}{Theorem}
\newtheorem{Lem}[Teo]{Lemma}

\newcommand{\ignore}[1]{}

\newcommand{\alex}[1]{\textcolor{blue}{#1}}



\begin{document}
%
\title{Hardness of Covering Alignment:\\ Phase Transition in Post-Sequence Genomics}

\author{Romeo~Rizzi, Massimo~Cairo, Veli~M\"akinen, Alexandru~I.~Tomescu and Daniel Valenzuela%
\IEEEcompsocitemizethanks{\IEEEcompsocthanksitem R. Rizzi is with the Department of Computer Science, University of Verona, Italy. E-mail: romeo.rizzi@univr.it\protect\\
M. Cairo is with the University of Trento, Italy. E-mail: massimo.cairo@unitn.it\protect\\
V. M\"akinen, A.I. Tomescu and D. Valenzuela are with the Helsinki Institute for Information Technology HIIT, Department of Computer Science, University of Helsinki, Finland. E-mail: \{veli.makinen,alexandru.tomescu,daniel.valenzuela\}@helsinki.fi}
\thanks{Manuscript received XXX; revised XXX.}}

%
%

\markboth{}{Rizzi \MakeLowercase{\textit{et al.}}: Hardness of Covering Alignment: Phase Transition in Post-Sequence Genomics}

\IEEEtitleabstractindextext{%
\begin{abstract}
Covering alignment problems arise from recent developments in genomics; so called pan-genome graphs are replacing reference genomes, and advances in haplotyping enable full content of diploid genomes to be used as basis of sequence analysis. In this paper, we show that the computational complexity will change for natural extensions of alignments to pan-genome representations and to diploid genomes. More broadly, our approach can also be seen as a minimal extension of sequence alignment to labelled directed acyclic graphs (labeled DAGs). Namely, we show that finding a \emph{covering alignment} of two labeled DAGs is NP-hard even on binary alphabets. A covering alignment asks for two paths $R_1$ (red) and $G_1$ (green) in DAG $D_1$ and two paths $R_2$ (red) and $G_2$ (green) in DAG $D_2$ that cover the nodes of the graphs and maximize the sum of the global alignment scores: $\as(\spell(R_1),\spell(R_2))+\as(\spell(G_1),\spell(G_2))$, where $\spell(P)$ is the concatenation of labels on the path $P$. Pair-wise alignment of haplotype sequences forming a diploid chromosome can be converted to a two-path coverable labelled DAG, and then the covering alignment models the similarity of two diploids over arbitrary recombinations.  We also give a reduction to the other direction, to show that such a recombination-oblivious diploid alignment is NP-hard on alphabets of size $3$.      
\end{abstract}

\begin{IEEEkeywords}
alignment, edit distance, directed acyclic graph, diploid genome, pan-genome, NP-hard problem
\end{IEEEkeywords}
}

\maketitle

\IEEEdisplaynontitleabstractindextext

%
\IEEEpeerreviewmaketitle

\IEEEraisesectionheading{\section{Introduction}\label{sec:introduction}}

%
%
%

\IEEEPARstart{F}{or} decades, sequence alignments have played a central role in computational molecular biology and especially in computational genomics. Interestingly, being a fundamental computer science problem, there has been a constant interplay with developments in theoretical computer science forums around the problem, and the development of practical bioinformatics tools. Most notably, this connection is visible in the so-called read aligners \cite{bowtie,bwa} that use Burrows-Wheeler indexing techniques \cite{FM05,GV05}. A recent breakthrough \cite{BI15} connects the difficulty of finding sub-quadratic time solution to pair-wise alignment to a  complexity theory question. There are still open questions around the basic sequence alignment setting (e.g. indexed approximate pattern matching), but at the same time the computational genomics community is moving towards abstractions beyond sequences, where even the most fundamental questions are open. One of the latest trends is to replace a reference genome with a \emph{pan-genome variant graph} \cite{Maretal16}, with a backbone consisting of a reference sequence and alternative paths encoding common variants observed in a population. 
A basic theoretical question and decisive technological issue is how the role of the sequence alignment toolbox and conceptual framework can scale up in elaborating this more structured data, a world intrinsically populated by labelled directed graphs, that in many cases we can assume to be acyclic at least to a large extent (labelled DAGs) (see e.g.~\cite{Maretal16}). 
One possible formulation is to ask for the minimum number of edits to convert one graph to another; this is MAX SNP-hard even when the input DAGs are unordered trees \cite{AroraLMSS98}. 

In this paper, we refine the tractability border of alignments by focusing on graphs that are as close to sequences as possible. Namely, we focus on labelled DAGs that are coverable by two paths. A covering alignment asks for two paths $R_1$ (red) and $G_1$ (green) in DAG $D_1$ and two paths $R_2$ (red) and $G_2$ (green) in DAG $D_2$ that cover the nodes of the graphs and maximize the sum of the global alignment scores: \[\as(\spell(R_1),\spell(R_2))+\as(\spell(G_1),\spell(G_2)),\] where $\spell(P)$ is the concatenation of labels on the path $P$. We show that this problem is NP-hard even on binary alphabets. A more principled way to derive this formulation comes from modeling \emph{diploid} genomes \cite{MV14}, where the labelled DAG is a grid graph denoting a pair-wise alignment of haplotypes. We defer the detailed derivation and applications of this natural similarity measure in the context of diploid alignments to Section~\ref{sec:diploidalign}. We show that this restricted variant of covering alignment, called \emph{recombination-oblivious diploid alignment problem}, is NP-hard on alphabets of size $3$. This problem becomes polynomial time solvable once one of the input alignments needs not be covered by the optimal solution, or when the problem is otherwise similarly relaxed \cite{KG98,MV14,MV15} (see Section~\ref{sec:diploidalign}).

We hope these results are starting points for a more systematic study of sequence analysis in the era of post-sequence genomics: Our reduction from multiple alignment to covering alignment of two labelled DAGs would seem to indicate that problems that are NP-hard on many sequences become NP-hard already on two inputs with higher level abstractions. As the reduction works on a binary alphabet, on DAGs minimally harder than sequences, and natural relaxations of the problem become solvable in polynomial time, we have thus identified a phase transition between polynomially-solvable and NP-hard alignment problems. 

Our reductions follow a general approach introduced in~\cite{Rizzi2013} to show the NP-completeness of the problem of deciding whether a string is a square. 

\section{Preliminaries}

\subsection{Problem definition}

Let $\Sigma$ be a finite alphabet. We use $\Sigma^*$ to denote the set of all strings over $\Sigma$ and $\Sigma^+$ to denote the set of all not-empty strings over $\Sigma$. In this paper we will also use the term \emph{sequence} to denote a string. The empty string is denoted by $\varepsilon$, and $\Sigma_\varepsilon$ denotes $\Sigma \cup \{\varepsilon\}$. For a string $A=a_1 a_2 \cdots a_\ell$, $|A|$ denotes its length, namely $\ell$, and $A[j]$ denotes its $j$th symbol, namely $a_j$. We will use the indexed product notation $\prod$ to denote repeated concatenation of strings. The \emph{edit distance} of strings $A$ and $B$, denoted $\ed(A,B)$, is defined as the minimum number of deletions, insertions and substitution of symbols to convert $A$ into $B$. 

For $\Sigma' \in \{\Sigma, \Sigma_\varepsilon, \Sigma^*, \Sigma^+\}$,
a \emph{$\Sigma'$-DAG} is a tuple $\mathcal{D} = (D,\ell)$, where $D=(V,E)$ is a DAG with $|V|$ nodes and $|E|$ edges, and $\ell:V\mapsto \Sigma'$ is a total function on $V$. For a path $P=v_1, \ldots, v_t$ in $D$, the \emph{spelling} of $P$, denoted $\spell(P)$, equals $\ell(v_1) \cdots \ell(v_t)$. We say that a set $\mathcal{P}$ of paths in $D$ is a \emph{path cover} if every node in $V$ appears in some $P \in \mathcal{P}$. The \emph{width} of $D$ equals the minimum cardinality of a path cover of $D$.

For $\Sigma' \in \{\Sigma, \Sigma_\varepsilon, \Sigma^*, \Sigma^+\}$ we consider the following problem:

\begin{framed}
\noindent \textbf{Path covers of minimum edit distance in two $\Sigma'$-DAGs ({\sc PC-Min-ED-$\Sigma'$})}\\
\noindent INPUT: Two $\Sigma'$-DAGs $\mathcal{D}_1 = (D_1, \ell_1)$ and $\mathcal{D}_2 = (D_2, \ell_2)$ of width 2.\\
\noindent OUTPUT: A path cover $\{R_1,G_1\}$ of $D_1$ and 
        a path cover $\{R_2,G_2\}$ of $D_2$
        minimizing \[\ed(\spell(R_1),\spell(R_2)) + \ed(\spell(G_1),\spell(G_2)).\]
\end{framed}

Here $R$ works as an analogy to \emph{red} path and $G$ works as an analogy to \emph{green} path.

\ignore{
\begin{sloppypar}
\alex{[Probably remove for this submission]} As we discuss here below, the {\sc PC-Min-ED-$\Sigma'$} problem has many natural variants, and our NP-hardness proof pertains to all of these. A more general objective function could be $\alpha_R \ed(\spell(R_1),\spell(R_2)) + \alpha_G \ed(\spell(G_1),\spell(G_2))$ for some positive real constants $\alpha_R$ and $\alpha_G$ with $\alpha_R+\alpha_G=1$. Another objective function could be $\max \{\ed(\spell(R_1),\spell(R_2)), \ed(\spell(G_1),\spell(G_2))\}$, or the problem could ask to lexicographically minimize the vector $[\ed(\spell(R_1),\spell(R_2)), \ed(\spell(G_1),\spell(G_2))]$.
\end{sloppypar}}



\subsection{Edit distance and optimal alignments}

The notion of edit distance is tightly connected with that of a pair-wise alignment (see e.g. \cite{MBCT15} for an introduction to these notions). A \emph{pair-wise alignment} of two sequences $A,B \in \Sigma^*$ is a pair of strings $(A',B')$ where:
\begin{itemize}
\item $A',B' \in (\Sigma \cup \{\gap\})^*$, where \gap is a special \emph{gap symbol};
\item $A',B'$ have the same length $L$;
\item each $A'$ and $B'$ contains exactly $L-|A|$ and $L-|B|$ gap symbols, respectively.
\end{itemize}
Thus, $A$ and $B$ are \emph{subsequences} of $A'$ and $B'$, respectively, in the sense that they can be obtained from them by deleting zero or more symbols. A pair $(A'[i],B'[i])$ is called 
\begin{itemize}
\item \emph{identity}, if $A'[i],B'[i] \in \Sigma$ and $A'[i] = B'[i]$,
\item \emph{substitution}, if $A'[i],B'[i] \in \Sigma$ and $A'[i] \neq B'[i]$,
\item \emph{deletion}, if $A'[i] \in \Sigma$ and $B'[i] = \gap$,
\item \emph{insertion}, if $B'[i] \in \Sigma$ and $A'[i] = \gap$.
\end{itemize}
An insertion or deletion is also called a \emph{gap}. The set of all pair-wise alignments of $A$ and $B$ is denoted by $\mathsf{A}(A,B)$. The edit distance of $A$ and $B$ can also be expressed in terms of alignments, as
\[\ed(A,B) = \min_{(A',B') \in \mathsf{A}(A,B)}  |\{i \in \{1,\dots,|A'|\}\;:\; A'[i]\neq B'[i]\}|\]

Given a \emph{scoring function} $s : \Sigma \cup \{\gap\} \mapsto \mathbb{R}$, the \emph{global alignment score} of a pairwise alignment $(A',B')$ is \[\as(A',B') = \sum_{i \in \{1,\dots,|A'|\}} s(A'[i],B'[i]).\] An \emph{optimal alignment} of $A$ and $B$ is an alignment of maximum global alignment score. With the scoring scheme $s(\gap,c)=s(c,\gap)=-1$, $s(a,b)=-1$ for all $a\neq b \in \Sigma$, and $s(a,a)=0$, for all $a \in \Sigma$, finding the optimal global alignment score of $A$ and $B$ is equivalent to computing their edit distance. Unless otherwise stated, in the rest of this paper we assume that an ``optimal alignment'' refers to this scoring scheme for edit distance.


In Section~\ref{sect:starhardness}, we prove that the
 {\sc Min-ED-2PC-$\Sigma_\varepsilon$} problem
 (and hence the {\sc Min-ED-2PC-$\Sigma^*$} problem) is NP-hard in all of the above variants. Remarkably, these negative results hold also in the case of a binary alphabet $\Sigma := \{0,1\}$. The instances resulting from the reduction can also be cast as inputs to the Recombination-Oblivious Diploid Alignment Problem (see Section~\ref{sec:diploidalign}); the two problems are polynomially equivalent on these instances and this proves that Recombination-Oblivious Diploid Alignment Problem is also NP-hard. 

\subsection{Further notations for strings and graphs}

A string $S$ over $\Sigma$ of length $n$ can be expressed as a $\Sigma$-DAG $\overline{S}$ with $n$ nodes and of width~$1$ consisting of a single path $P = v_1, v_2, \dots, v_n$ with $\ell(v_i) = S[i]$ (equivalently, $\spell(P) = S$). 

\ignore{Moreover, let $\overrightarrow{S}$ denote the transitive closure of $\overline{S}$, that is, the DAG $\overrightarrow{S}$ having  
$V(\overrightarrow{S}) = V(\overline{S})$, the very same labeling $\ell : V(\overrightarrow{S}) = V(\overline{S}) \mapsto \Sigma$, but $A(\overrightarrow{S}) = \{(v_i,v_j) \; : \; i < j \}$.
Note that both $\overline{S}$ and $\overrightarrow{S}$
are $\Sigma$-DAGs with a single source and a single sink.
\alex{Is the transitive closure needed?}}

Let $\mathcal{D}_1 = (D_1, \ell_1)$ be a $\Sigma'$-DAG with a single sink $t_1$
and $\mathcal{D}_2 = (D_2, \ell_2)$ be a $\Sigma'$-DAG with a single source $s_2$.
The $\Sigma'$-DAG obtained by adding the arc $(t_1,s_2)$ to the disjoint union of $\mathcal{D}_1$ and $\mathcal{D}_2$ is denoted by $\mathcal{D}_1\mathcal{D}_2$, juxtaposing the aliases, just as with strings, to suggest the concatenation in series of the actual objects.


Given a $\Sigma^*$-DAG $\mathcal{D}$, we denote by $\mathcal{D}_\varepsilon$ the $\Sigma_\varepsilon$-DAG obtained from $\mathcal{D}$ by expanding nodes labeled by strings of length more than $1$ into paths. Namely, each node $v$ labelled by $S$, $|S| \geq 2$ is replaced by the path $\ol{S}$; the arcs incident to $v$ get updated as follows: the arcs entering (exiting, resp.) $v$ now enter (exit, resp.) the first (the last, resp.) node of $\ol{S}$.

Analogously, given a $\Sigma^+$-DAG $\mathcal{D}$, we denote by $\mathcal{D}_\Sigma$ the $\Sigma$-DAG obtained from $\mathcal{D}$ in the same manner as above.

\ignore{ 
In Section~\ref{sect:genhardness}, we refine the construction
given in Section~\ref{sect:starhardness}
to obtain the stronger result that the {\sc PC-Min-ED-$\Sigma$} problem is also NP-hard in all of these variants.

In an extended version of this paper, we will refine the construction
given in Section~\ref{sect:starhardness}
to obtain the stronger result that the {\sc PC-Min-ED-$\Sigma$} problem is also NP-hard in all of these variants. 
}

\section{NP-hardness of {\sc PC-Min-ED-$\Sigma_\varepsilon$}}
\label{sect:starhardness}

In this section, the NP-hardness of {\sc PC-Min-ED-$\Sigma'$}
is shown for the case in which the empty string can occur as a label for some of the nodes, i.e., the labeling function is not total on $V$. 
\ignore{
The NP-hardness proof for the restricted problem where the $\varepsilon$ label is banned will be obtained in a later section,
by refining the construction first given in this one.
This subsequent restructuring will add one further layer on the top of the basic reduction. The direct integration of this layer, while possible,
would severely obscure the main principles behind the general working of the reduction.\\
}

Let $\mathbf{N}_n := \{0,1,\ldots, n-1\}$. For brevity, we also denote the binary $\bmod$ operation by $\%$. The reduction, which we will describe in Section~\ref{subsec:reduction}, is from the following problem:\\

\begin{framed}
\noindent \textbf{Longest Common Subsequence} ({\sc LCS})\\
\noindent INPUT: $n$ strings $S_0, \ldots, S_{n-1}$.\\
\noindent OUTPUT: A longest possible string $S$
              that is a subsequence of every $S_i$, $i\in \mathbf{N}_n$.
\end{framed}

{\sc LCS} is known to be NP-hard even when the input strings are all binary and of the same length $\ell$~\cite{Mai78}. Moreover, we can assume that each $S_i$ contains both a $0$ and a $1$. Given $n$ input strings $S_0, \dots, S_{n-1}$ of the same length $\ell$ to the LCS problem, we show how to construct
two $\Sigma^*$-DAGs $\mathcal{A}$ and $\mathcal{B}$ of width $2$ such that the following two lemmas hold.

\begin{Lem} \label{lem:NPCstar_easy}
   Let $S'$ be a common subsequence for $S_0, \ldots , S_{n-1}$,
   and let $\delta = \ell - |S'|$.
   Then there exists a path cover $\{A_r,A_g\}$ of $\mathcal{A}_\varepsilon$, and a path cover $\{B_r,B_g\}$ of $\mathcal{B}_\varepsilon$, such that $\ed(\spell(A_r),\spell(B_r)) = 0$ and $\ed(\spell(A_g),\spell(B_g)) = 2\,\delta$. Hence, $\ed(\spell(A_r),\spell(B_r))+\ed(\spell(A_g),\spell(B_g)) = 2\,\delta$.
\end{Lem}

\begin{Lem} \label{lem:NPCstar_hard}
   Let $\{A_r,A_g\}$ be a path cover for $\mathcal{A}_\varepsilon$, and let $\{B_r,B_g\}$ be a path cover for $\mathcal{B}_\varepsilon$. Let $d:= \ed(\spell(A_r),\spell(B_r)) + \ed(\spell(A_g),\spell(B_g))$.  
   Then there exists a common subsequence $S'$ for $S_0, \ldots , S_{n-1}$
   with $d \geq 2(\ell - |S'|)$.
\end{Lem}

As the reader will check, the construction can be easily performed in polynomial time. As a consequence, the above two lemmas (whose formal proofs will be given later, after describing the construction) will prove the NP-hardness
of {\sc PC-Min-ED-$\Sigma_\varepsilon$}.\ignore{ in essentially all of the variants introduced.}
\ignore{(Only minor modifications will also settle the variants requiring to minimize the functional $\max \{d(\spell(R_1),\spell(R_2)), d(\spell(G_1),\spell(G_2))\}$.) }
\ignore{
The extension of these negative results to $\Sigma$-DAGs
will be discussed in Subsection~\ref{subsec:not-empty} 
}

\subsection{The reduction, and the general idea behind it}
\label{subsec:reduction}

Let $S_0,\dots,S_{n-1}$ be $n$ binary strings of the same length $\ell$, each having both a $0$ and a $1$.

In the reduction, we will use an integer $M$ that will play the role of a sufficiently big
constant. A string $T$ whose length depends on $M$ will play the role of a firm \emph{tab} gadget,
capable of forcing an optimal alignment to align the $i$-th occurrence of $T$ in one string to the $i$-th occurrence of $T$ in the other string. We now explain how to choose $T$.

A \emph{linear de Bruijn sequence of order $k$} over a binary alphabet is a string in which every binary string of length $k$ appears as substring exactly once~\cite{FSM94,deB46}. Let $DB(k)$ denote one such string. The string $DB(k)$ has length $2^k + k - 1$ and can be constructed in linear time by taking the spelling of an Eulerian cycle in a de Bruijn graph of order $k-1$ \cite{Goo46,BP07}.

\ignore{\begin{Lem}
\label{lem:random-no-repeats}
Let $S$ be a random $\{0,1\}$-string
of fixed length $|S|$ and let $\ell=O(\log |S|)$. Then, with high probability, $S$
has no repeated substring of length $\ell$, i.e., for any $1\leq i,j\leq |S|-\ell$,
we have $S[i..i+\ell-1]=S[j..j+\ell-1]$ iff $i=j$.
\end{Lem}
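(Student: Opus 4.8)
The plan is a routine first-moment (union bound) argument. Fix the length parameter $\ell = c\log|S|$ for a constant $c$ to be chosen, and let $N = |S|$. For each ordered pair $(i,j)$ with $1\le i<j\le N-\ell+1$ consider the bad event $E_{i,j}$ that the two length-$\ell$ windows coincide, i.e.\ $S[i..i+\ell-1]=S[j..j+\ell-1]$. First I would bound $\Pr[E_{i,j}]$. Each such equality imposes $\ell$ binary constraints of the form $S[i+k]=S[j+k]$ for $k=0,\ldots,\ell-1$; grouping the affected positions into orbits under the shift $i\mapsto j$ one sees these constraints, taken together, still force at least $\ell$ of the free coordinates to be determined by the others (the worst case being when the windows are disjoint, giving exactly $\ell$ independent constraints; overlapping windows only create fewer degrees of freedom but the count of constraints never drops below $\ell$ for $j\ne i$). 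Hence $\Pr[E_{i,j}] \le 2^{-\ell}$.

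Next I would take the union bound over all $O(N^2)$ pairs:
\[
  \Pr\!\left[\exists\, i\ne j:\ S[i..i+\ell-1]=S[j..j+\ell-1]\right]
  \ \le\ \binom{N-\ell+1}{2}\, 2^{-\ell}
  \ \le\ N^2\, 2^{-\ell}.
\]
Choosing $c$ large enough, say $\ell \ge 3\log_2 N$, makes the right-hand side at most $N^{-1}$, which tends to $0$; more generally any $\ell = (2+\gamma)\log_2 N$ with $\gamma>0$ gives failure probability $O(N^{-\gamma})\to 0$. Therefore with high probability $S$ has no repeated substring of length $\ell$, and the stated biconditional (equality of windows forces $i=j$) holds for every valid $i,j$ simultaneously.

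There is essentially no obstacle here; the only point requiring a line of care is the probability estimate $\Pr[E_{i,j}]\le 2^{-\ell}$ when the two windows \emph{overlap} (i.e.\ $j-i<\ell$), since then the $\ell$ equality constraints are not mutually independent. The clean way to handle this is to note that overlapping equality $S[i..i+\ell-1]=S[j..j+\ell-1]$ forces $S$ to be periodic with period $p=j-i$ on the range $[i..j+\ell-1]$, a stretch of $\ell+p$ consecutive symbols; such a string is determined by its first $p\le \ell$ symbols, so the event has probability $2^{-(\ell+p)+p}=2^{-\ell}\le 2^{-\ell}$ as well. Either way the bound $2^{-\ell}$ holds uniformly, and the union bound closes the argument. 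The constant hidden in $\ell=O(\log|S|)$ is explicit: any constant exceeding $2/\log_2 e$... more simply, exceeding $2$ base-$2$ logarithms, suffices.
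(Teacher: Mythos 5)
Your proof is correct and follows essentially the same route as the paper's: a union bound over the $O(|S|^2)$ pairs of windows, with a per-pair collision probability of $2^{-\ell}$ and a choice of the constant in $\ell = O(\log |S|)$ strictly above $2$. Your explicit treatment of the overlapping case via the periodicity argument is a welcome refinement: the paper simply asserts that the $\ell$ positionwise equalities are independent, which does happen to be true (the constraint graph on positions is a union of paths, so any subset of $k$ constraints has probability exactly $2^{-k}$) but is left unjustified there.
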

\begin{proof}
Take $i\neq j$. We have $\Pr(S[i..i+\ell-1]=S[j..j+\ell-1])=2^{-\ell}$,
since the events $S[i+\delta]=S[j+\delta]$ for $0\leq\delta\leq\ell-1$
are independent of probability $1/2$. Applying the union bound we
get 
\[
\Pr(S[i..i+\ell-1]=S[j..j+\ell-1]\text{ for some }i\neq j)\leq n^{2}\cdot2^{-\ell}\leq n^{2}\cdot2^{-\alpha\log n}=n^{2-\alpha}.
\]
\qed
\end{proof}
}

\begin{figure}
\centering
\includegraphics[width=4cm]{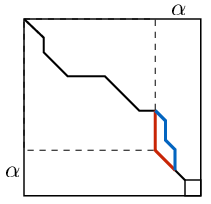}
\caption{Under the unit cost edit distance, if the compared strings have a common suffix of length $|\alpha|$, the end of any optimal alignment (marked with blue in the $\alpha$ region of identities) can be canonicalized so that it consists first of a sequence of insertions/deletions and then a sequence of identities (marked with red). Symmetric canonicalization can be done for a common prefix.\label{fig:whollyaligned}}
\end{figure}

\begin{Lem}
\begin{sloppypar}
Let $\alpha_{1},\dots,\alpha_{q}$ and $\beta_{1},\dots,\beta_{q}$ be strings of length at most $M$. Let $k$ be such that $|DB(k)|=\Theta(qM\log qM+qM^{2})$ and let $T = DB(k)$. Then the two strings 
\end{sloppypar}
\[A:=\alpha_{1}T\alpha_{2}T\dots\alpha_{q-1}T\alpha_{q}\]
\[B:=\beta_{1}T\beta_{2}T\dots\beta_{q-1}T\beta_{q}\]
have an optimal alignment
that aligns perfectly the $q-1$ occurrences of $T$ in each string.
\label{lem:separator}
\end{Lem}
\begin{proof}
Take an optimal alignment and suppose that the $k$-th character of
the $i$-th occurrence of $T$ in $A$ is aligned with the same $k$-th
character of the $j$-th occurrence of $T$ in $B$. Then, it can be
assumed that these occurrences of $T$ are wholly aligned, without losing
optimality (see Figure~\ref{fig:whollyaligned}). Hence, it is sufficient to rule out any optimal alignment
where some occurrence of $T$ in $A$ has \emph{no} character aligned
with any other occurrence of $T$ in $B$. We show that such an
alignment has cost $\omega(qM)$, so it is worse than aligning \emph{only}
the $q-1$ occurrences of $T$, thus it is not optimal.

Suppose by contradiction that the $i$-th occurrence of $T$ in $A$
(denoted with $T_{i}$) is such that:
\begin{itemize}
\item for no $1\leq k\leq|T|$ and $1\leq j\leq q$, the $k$-th character
of $T_{i}$ is aligned with the $k$-th character the $j$-th occurrence
of $T$ in $B$,
\item the cost of aligning $T_{i}$ with the smallest substring of $B$
containing the aligned characters (denoted with $B'$) is $o(qM)$.
\end{itemize}
Observe that $T_{i}$ is aligned by identities with at least one substring
$B''$ of $B'$ of size 
\begin{align*}
|T|/o(qM) & =\omega(|T|/qM)=\omega(qM\log qM/qM+qM^{2}/qM)\\
		  & =\omega(\log qM+M)=\omega(M+\log|T|).
\end{align*}
(See Figure~\ref{fig:pigeonhole} for the reasoning.)

\begin{figure}
\centering
\includegraphics[width=6cm]{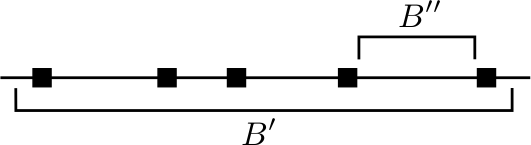}
\caption{Black boxes indicate substitutions or gaps; even if they are evenly distributed, there is a long region $B''$ of identities.\label{fig:pigeonhole}}
\end{figure}

This substring may include up to $M$ characters from
some $\beta_{h}$, but then it includes at least $\omega(\log|T|)=\omega(k)$
consecutive characters from an occurrence of $T$ in $B$, contradicting
$T$ being a de Bruijn sequence (after fixing suitable constants in the asymptotic notation). \qed
\end{proof}
  
The high-level structure of the two $\Sigma^*$-DAGs $\mathcal{A}$ and $\mathcal{B}$ is depicted in Figure~\ref{fig:structAB}. The value $N$, which we choose to be $2n\ell$, plays again the role of a sufficiently big number. The strings $T_1, T_2, \ldots, T_{N+1}$ are just identical copies of the tab gadget $T$, their subscripts are there only to indicate their position in $\Sigma^*$-DAG; we will refer to this subscript as \emph{depth}. Figure~\ref{fig:gadgetD} defines the content of the $D(i)$ gadget, for $i\in \mathbf{N}_n$. 

In our reduction, $M \leq \ell$ will suffice, and we will always apply Lemma~\ref{lem:separator} for $q = N$ or $q = N+1$ strings. Thus the tab gadget $T$ will be of length \[\Theta(N\ell\log N\ell + N\ell^2) = \Theta(n\ell^2\log n\ell^2 + n\ell^3).\]

These two $\Sigma^*$-DAGs, and their expanded versions $\mathcal{A}_\varepsilon$ and $\mathcal{B}_\varepsilon$, can be clearly constructed in polynomial time.



\begin{figure*}[h]
\centering
  \includegraphics[width=13cm]{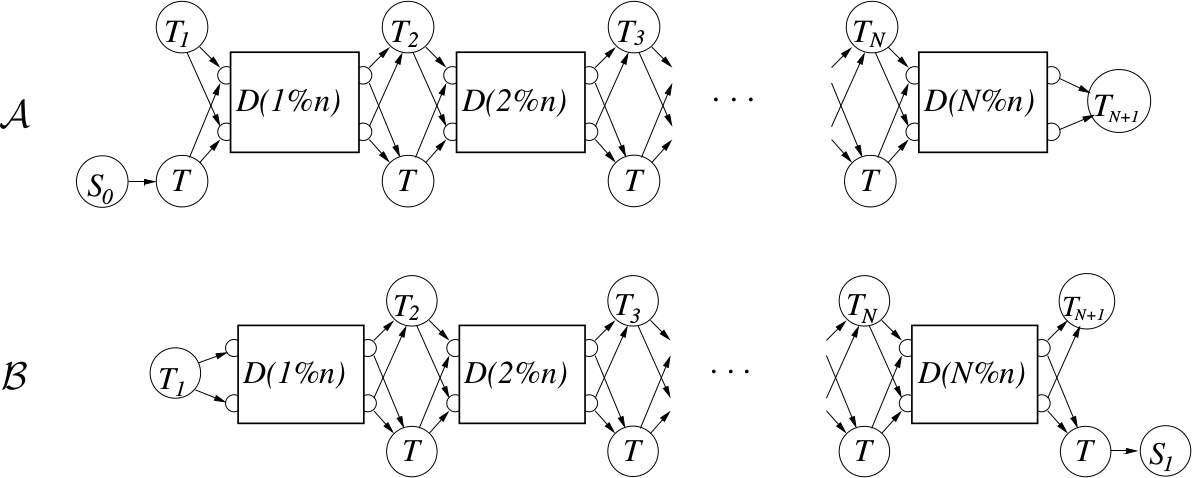}
  \caption{The high-level structure of $\mathcal{A}$ and $\mathcal{B}$.}
  \label{fig:structAB}
\end{figure*}

%
%


\begin{figure}[htb]   
\begin{center}
  \includegraphics[width=\columnwidth]{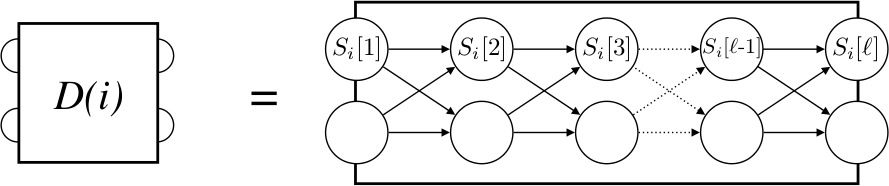}
  \caption{The $D(i)$ gadget. The empty nodes are labelled with the empty string.}
  \label{fig:gadgetD}
\end{center}
\end{figure}

\ignore{
\alex{[Probably remove]A first lower-bound on $M$, namely $M > 2\ell$,
comes most natural after considering the statements of Lemmas~\ref{lem:NPCstar_easy} and~\ref{lem:NPCstar_hard}.
A second and last lower-bound on $M$, namely $M \geq 4\ell^2$,
comes after considering that any path
entirely contained within a $D(i)$ gadget has length less than $4\ell^2$. Thus we set $M := \max\{2\ell, 4\ell^2\} = 4\ell^2$.}}

We next give the proofs of the lemmas and then extend the reduction to diploid alignments.

\subsection{Proofs of the lemmas}

The proofs depict green and red paths through the gadgets, as well as their alignments. Figures~\ref{fig:high-level-rg}~and~\ref{fig:gadgetD-rg} help to follow the constructions.

\begin{figure*}[h]
\centering
  \includegraphics[width=13cm]{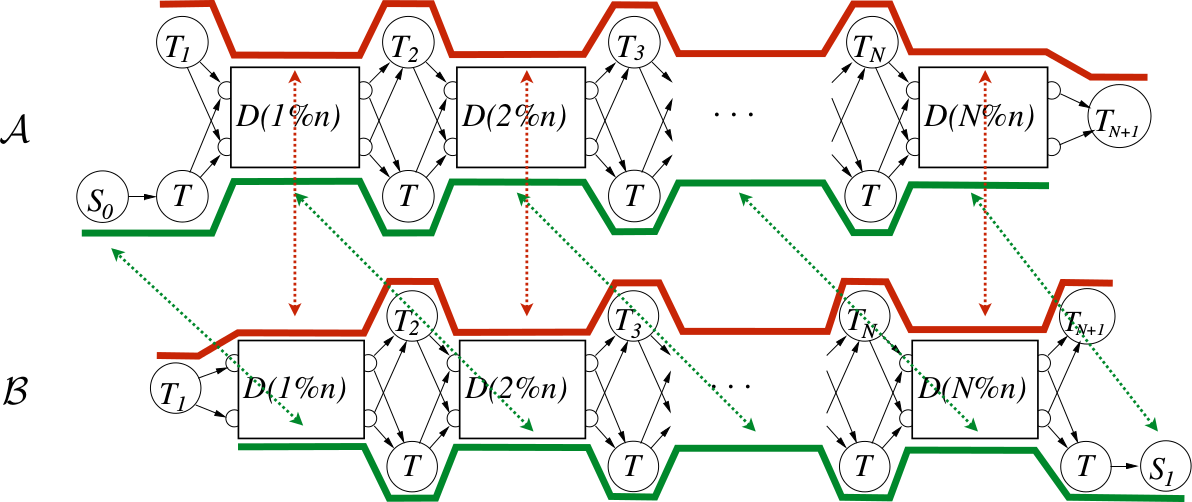}
  \caption{Red and green paths through $\mathcal{A}$ and $\mathcal{B}$ as depicted in the proofs.}
  \label{fig:high-level-rg}
\end{figure*}

\begin{figure}[htb]   
\begin{center}
  \includegraphics[width=\columnwidth]{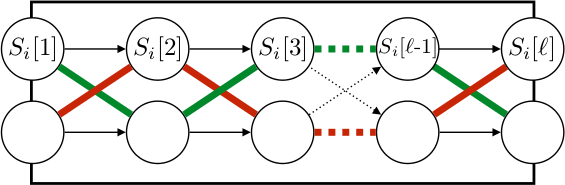}
  \caption{Red and green paths through the $D(i)$ gadget as depicted in the proofs.}
  \label{fig:gadgetD-rg}
\end{center}
\end{figure}

\noindent\emph{Proof of Lemma~\ref{lem:NPCstar_easy}:\/}
For $i\in \mathbf{N}_n$, let $S'_i$ denote the subsequence of $S_i$ obtained by deleting the symbols selected by its subsequence $S'$ (i.e., $S'_i$ is the complement of $S'$ in $S_i$). Assume that we underline in green the $|S'|$ symbols in $S_i$ which originate from $S'$ and cross out in red the $|S'_i|$ symbols in $S_i$ which originate from $S'_i$.

   Also, 
   if the $j$-th symbol of $S_i$ is underlined in green,
   then let $\psi_i[j] := \varepsilon$,
   otherwise, if the $j$-th symbol of $S_i$ is crossed out in red,
   then $\psi_i[j] := S_i[j]$.    
   Notice that there exist two (disjoint) paths $R_i$
   and $G_i$ forming a path cover of the $\Sigma^*$-DAG $D(i)$
   and such that
\[
   \spell(R_i) = \psi_i[1] \psi_i[2] \cdots \psi_i[\ell] \text{~~~and~~~} \spell(G_i) = S'.
\]

   The reader should now check that $\mathcal{A}_\varepsilon$ is jointly covered by two disjoint paths
   $A_r$ and $A_g$ such that:
\[
\spell(A_r) = 
       \left(
          \prod_{i=1}^N (T ~ \spell(R_{i \% n}))
       \right) T, \\
\]

\begin{align*}
\spell(A_g) &=
       S_0 
          \prod_{i=1}^N (T~ \spell(G_{i \% n}))
       =  
       S_0 
          \prod_{i=1}^N (T ~ S')
        \\
       & = S_0 
       \left(
          \prod_{i=1}^{N-1}(T ~ S')
       \right)  ~T ~ S'.
\end{align*}

The reader is also invited to check that $\mathcal{B}_\varepsilon$
is jointly covered by two disjoint paths
$B_r$ and $B_g$ such that:

\begin{align*}
\spell(B_r) & =
       \left(
          \prod_{i=1}^{N} (T ~ \spell(R_{i \% n}))
       \right) T 
       =
       T
       \left(
          \prod_{i=1}^{N} (\spell(R_{i \% n})~T)
       \right) \\
       & = \spell(A_r),
\end{align*}
\begin{align*}
\spell(B_g) &=
       \left(
          \prod_{i=1}^{N} (\spell(G_{i \% n}) T)
       \right)
              S_1 
       =
       \left(
          \prod_{i=1}^{N} (S'~T)
       \right)
              S_1 \\
       &=
       S'
       \left(
          T~\prod_{i=1}^{N-1} (S'~T)
       \right)
              S_1 
       =
       S'
       \left(
          \prod_{i=1}^{N-1} (T~S')
       \right) ~T
               S_1. \\
\end{align*}

Clearly, $\ed(\spell(A_r),\spell(B_r)) = 0$. By applying Lemma~\ref{lem:separator} to the strings 
\begin{align*}
& \alpha_1 = S_0, \alpha_2 = \cdots = \alpha_{N+1} = S' \\
& \beta_1 = \cdots = \beta_{N}=S', \beta_{N+1} = S_1
\end{align*}
we have that any optimal alignment of $\spell(A_g)$ and $\spell(B_g)$ perfectly aligns the $N$ occurrences of $T$. Thus: \label{alphbabeta}
\[
   \ed(\spell(A_g),\spell(B_g)) = \ed(S_0,S') + \ed(S',S_1) = \delta+\delta = 2\, \delta.
\]
\qed

To prove Lemma~\ref{lem:NPCstar_hard}, we introduce one more notation related to subgraphs. 
   For a string $S$, an $S$\emph{-subpath} of a $\Sigma'$-DAG $\mathcal{D}$
   is a $\Sigma'$-DAG $P$ such that $P$ is a subgraph of $\mathcal{D}$, it is a path and $\spell(P) = S$.  

\medskip

\noindent\emph{Proof of Lemma~\ref{lem:NPCstar_hard}:\/}
   We assume $d < 2\ell$ since otherwise the thesis holds vacuously.

Since $\mathcal{A}$ has two sources, namely $T_1$ and $S_0$, we have that each of $A_r$ and $A_g$ starts in precisely one of them. To simplify notation in what follows, let now $A_r$ denote that path starting in $T_1$ (and thus let $A_g$ be that path starting in $S_0$).
   
   Notice that $\mathcal{A}_\varepsilon$ ($\mathcal{B}_\varepsilon$) contains precisely
   $2N+1$ $T$-subpaths (which we also call \emph{tab subpaths}),
   and these are displaced as follows. 
   For $i=1,\ldots,N$, $\mathcal{A}_\varepsilon$ ($\mathcal{B}_\varepsilon$, resp.)
contains two parallel tab subpaths at depth~$i$ (at depth~$i+1$, resp.) and precisely one tab subpath at depth~$N+1$ (at depth~$1$, resp.). The idea here is that within $\mathcal{A}_\varepsilon$ (or $\mathcal{B}_\varepsilon$) we can reach the nodes in a tab subpath at depth~$i$ from the nodes in a tab subpath at depth~$i-1$. Clearly, once a solution path of $\mathcal{A}_\varepsilon$ (or $\mathcal{B}_\varepsilon$) passes through the first and the last node of a tab subpath, it traverses it entirely, holding it as a subpath of itself.   

Notice that each one of the paths $A_g$ and $A_r$ ($B_g$ and $B_r$, resp.)
must necessarily traverse precisely one tab subpath from any pair of parallel tab subpaths, i.e., precisely one tab subpath of depth~$i$, for $i=1,2,\ldots, N$ (for $i=2,3,\ldots, N+1$, resp.).

Also, 
at least one among $A_g$ and $A_r$ ($B_g$ and $B_r$, resp.) 
also traverses the single  tab subpath of depth~$N+1$ (of depth~$1$, resp.).
We claim that in fact, precisely one among $A_g$ and $A_r$ ($B_g$ and $B_r$, resp.) 
also traverses the single  tab subpath of depth~$N+1$ (of depth~$1$, resp.).

Indeed, assume for a contradiction that $A_r$ ends immediately before the tab subpath at depth~$N+1$. This implies that $A_g$ ends with $T_{N+1}$. We claim that in this case we have $d \geq 2\ell$, contradicting the assumption made at the beginning of this proof. First, note that $A_g$ has $N + 1$ of tab subpaths. If $B_g$ had a different number of tab subpaths (i.e., $N$), then $\ed(A_g,B_g) \geq |T|$. From the choice of $T$, we have that $T \geq 2\ell$ and thus $d \geq \ed(A_g,B_g) \geq 2\ell$, which is the desired contradiction. 

We now have that $\spell(A_g)$ has $S_0$ as prefix and contains $N+1$ tab subpaths, and $B_g$ has $S_1$ as suffix and contains $N+1$ tab subpaths. By Lemma~\ref{lem:separator} we have that the $N+1$ occurrences of the tab subpath are perfectly aligned, and thus $d\geq \ed(A_g,B_g) \geq |S_0| + |S_1| = 2\ell$, again the desired contradiction.

By a symmetric argument we obtain that also precisely one among $B_g$ and $B_r$ starts with the tab subpath of depth~$1$.

\ignore{
   In the case of the lexicographic metric, where we assume $d(\spell(A_r),\spell(B_r)) = 0$, it can be easily enforced that $\spell(A_r)$ has the tab string $T$ as a prefix. In the more difficult case where $\alpha_R= \alpha_G =1$,
   we can ensure this by possibly swapping $A_r$ and $A_g$
   (also swapping $B_r$ and $B_g$ at the same time).
   After this double swapping, it can be easily argued that also $\spell(B_r)$ has the tab string $T$ as a prefix.
   It also follows that $\spell(B_g)$ has a string $S'_0 T$ as a prefix,
   where $S'_0$ is a subsequence of $S_0$. This implies what anticipated above:
   $B_g$ does not traverse the $T$ subpath at depth~$1$ in $\mathcal{B}_\varepsilon$. And all the above arguments are perfectly symmetric.\\
   }
At this point, we summarize the situation as follows:
\begin{enumerate}
\item the tab subpaths of $A_g$ are precisely $N$, namely those at depth $1, 2, \ldots, N$;

\item the tab subpaths of $A_r$ are precisely $N+1$, namely those at depth $1, 2, \ldots, N, N+1$;

\item the tab subpaths of $B_r$ are precisely $N+1$, namely those at depth $1, 2, \ldots, N, N+1$.
   These are perfectly aligned with the $N+1$ tab subpaths of $A_r$.
   This means that, for every $i=1,\ldots, N$,
   the red subsequence of $D(i\%n)$ within $A_r$
   is aligned against the $D(i\%n)$ within $B_r$;

\item  the tab subpaths of $B_g$ are precisely $N$, namely those at depth $2, \ldots, N, N+1$.
   Notice that the $N$ tab subpaths of $B_g$ are out of phase
   with the $N$ tab subpaths of $A_g$.
   Namely, the first tab subpath of $B_g$ is at depth~$2$ and perfectly aligns with the first tab subpath of $A_g$ at depth~$1$.
   Therefore, the spelling of the green path through $D(1\%n)$ from $B_g$ gets aligned against the green path through $S_0$ from $A_g$.
   More generally, the spelling of the green path through $D((i+1)\%n)$ from $B_g$ gets aligned against the green path through $D(i\%n)$ from $A_g$.
\end{enumerate}

This misalignment of the two green strands,
while the two red strands perfectly are aligned,
is the key engine behind our reduction. We can now proceed with defining the common subsequence $S'$. 

We say that the $(d_1,d_2)$\emph{-interval} of $\mathcal{A}_\varepsilon$ ($\mathcal{B}_\varepsilon$) is the subgraph of $\mathcal{A}_\varepsilon$ ($\mathcal{B}_\varepsilon$)
induced by those nodes which can be reached by some node in a tab subpath
of depth~$d_1$ and which can reach some node in a tab subpath
of depth~$d_2$.
Since $d <2\,\ell$
, then there should exist some~$t = 1,\ldots, N$
such that,
the restriction of the paths $A_g$ and $A_r$ within the $(t,t+n)$-interval of $\mathcal{A}_\varepsilon$
are perfectly aligned (that is, perfectly identical)
to the the restriction of the path $B_g$ and to that of the path $B_r$ within the $(t,t+n)$-interval, respectively.
To see this, notice that the two alignments cover $N=2n\ell$ subgraphs and thus there must be a region of $2n\ell/d\geq 2n\ell/(2\ell)=n$ subgraphs inducing no alignment error. Call this region the \emph{identity zone}.

The existence of this identity zone allows us to define a common subsequence $S'$ to $S_0,\ldots, S_{n-1}$. Namely, the identities restricted to the content of $S_0$ and $S_1$ picked by an optimal covering alignment of $D(0)$ and $D(1)$ inside $A_g$ and $B_g$ fixes a common subsequence $S' = S_1[i_1]S_1[i_2] \cdots S_1[i_p]$ of $S_0$ and
$S_1$, and we need to show that this subsequence is common to all $S_0,\ldots, S_{n-1}$. 

Since $A_r$ must be picking in
$D(0)$ a complementary subsequence $S_0' = S_0[j_1]S_0[j_2] \cdots S_0[j_{\ell-p}]$ of $S_0$, where $i_{k_1} \neq j_{k_2}$ for
all $k_1$ and $k_2$, to guarantee $S_0$ is covered by $S' \cup S_0'$, then for an identity alignment, $B_r$ must
be picking in $D(0)$ a subsequence $S''$ of $S_0$ matching perfectly with $S_0'$. If one removes two
identical subsequences $S'$ and $S_0'$ from the same string ($S_0$), the resulting string is the same.
Hence one can modify $B_r$ to pick $S_0'$ instead of $S'$ without changing the alignment score.
The analogous modification of $B_g$ inside $D(0)$ to pick $S'$ within $D(0)$ also does not change the
score. One can continue propagating these modifications to the left and, analogously, to the
right until one has proven $S'$ to be a subsequence of all $S_0,\ldots, S_{n-1}$.

Since the identity region contains all different types of subDAG pairs, one can obtain an alignment with cost $d=2(\ell-|S'|)$ as follows. Copy the zero cost identity alignments to all places; with the same propagation argument as above, one observes that $S_0$ is aligned against $S'$ (being the prefix of $B_g$ before the first tab) and $S_1$ is aligned against $S'$ (being the suffix of $A_g$ after the last tab); all other parts of the alignments have cost zero. Since $S'$ is a subsequence of $S_0$ and of $S_1$, the optimal edits to make them match cost exactly $2(\ell-|S'|)$. On the other hand, there cannot be any better alignments: Each edit located between the identity zone and before $S_0$ is propagated as an extra symbol or missing symbol from $S'$ to the prefix of $B_g$ matched against $S_0$. In the former case, the extra symbols may improve the alignment of $S_0$ to the prefix of $B_g$, but these improvements cancel out with the cost of introducing these edits in the first place. In the latter case, the missing symbols just increase the cost. The case of edits between identity zone and $S_1$ is analogous. 
\ignore{[Assume now there are $2^D$s in the gadgets in place of $0^D$.] Since the identity region contains all different types of subDAG pairs, one can obtain an alignment with cost $d=2(\ell-|S'|)$ as follows. Copy the zero cost identity alignments to all places, and with the same propagation argument as above, one observes that $S_0$ is aligned against a sequence $S^p$ containing $S'$ as subsequence with other content consisting of symbols $2$ (being the prefix of $\mathcal{B}_g$ before the first tab) and $S_1$ is aligned againts against a sequence $S^s$ containing $S'$ as subsequence with other content consisting of symbols $2$ (being the suffix of $\mathcal{A}_g$ after the last tab); all other parts of the alignments have cost zero. Since $S'$ is a subsequence of $S_0$ and of $S_1$, and $S_0$ and $S_1$ contain no symbols $2$, the optimal edits to make $S^p$ and $S^s$ match them $S_0$ and $S_1$, respectively, cost exactly $2(\ell-|S'|)$. On the other hand, there cannot be any better alignments: Each edit located between the identity zone and before $S_0$ is propagated as an extra symbol or missing symbol from $S'$ to $S^p$. In the former case, the extra symbols may improve the alignment of $S_0$ to $S^p$, \emph{but these improvements cancel out with the cost of introducing these edits in the first place}. In the latter case, the missing symbols just increase the cost. The case of edits between identity zone and $S_1$ is analogous.} 
\qed

As a consequence of the above two lemmas, we obtain the claimed result.

\begin{Teo}
Problem {\sc PC-Min-ED-$\Sigma_\varepsilon$} is NP-hard on a binary alphabet.
\end{Teo}

\noindent \proof
Let $S_0,\dots,S_{n-1}$ be $n$ strings of length $\ell$ for the {\sc LCS} problem. We need to decide whether there is a common subsequence $S'$ of $S_0,\dots,S_{n-1}$ such that $\ell - |S'| = \delta$, for a given $\delta$. From this input, we construct the two DAGs $\mathcal{A}$ and $\mathcal{B}$ for problem {\sc PC-Min-ED-$\Sigma_\varepsilon$}. We claim that $S_0,\dots,S_{n-1}$ and $\delta$ is a yes input for {\sc LCS} if and only if the cost of an optimal solution for problem {\sc PC-Min-ED-$\Sigma_\varepsilon$} on $\mathcal{A}$ and $\mathcal{B}$ is at most $2\delta$. The forward and reverse implications follow from Lemma~\ref{lem:NPCstar_easy} and Lemma~\ref{lem:NPCstar_hard}, respectively.
\qed

\section{Recombination-Oblivious Diploid Alignment\label{sec:diploidalign}}

Pair-wise sequence alignments have been extended to capture many biological sequence features, such as mutation biases, repeats (DNA), splicing (RNA), and alternative codons (proteins) \cite{DEKM98,Gus97}, but extensions to diploid organisms have been less common \cite{KG98,MV14,MV15}. The motivation to model diploid alignment comes from the recent developments in sequencing and in haplotyping algorithms; it can be foreseen that one day we will have reasonably accurate haplotype sequences of each of the homologous sequences forming a chromosome pair. Such a \emph{diploid chromosome} can itself be expressed as a pair-wise alignment that stores the \emph{synchronization} of their haploid sequences, that is, telling in which positions a recombination is possible.  A \emph{recombination} of a pair-wise alignment $(A'[1..L],B'[1..L])$ of strings $A$ and $B$ is \[(A'[1..i]B'[i+1..L],B'[1..i]A'[i+1..L]),\] for some $i$. We also overload the notation $\spell(\cdot)$, and denote by $\spell(A')$ the initial string $A$, that is, $\spell(A')$ is an operation removing the gap symbols \gap from $A'$. We obtain the following problem.

\begin{framed}
\noindent \textbf{Recombination-Oblivious Diploid Alignment Problem}\\
\noindent INPUT: Alignments $(A',B')$ and $(C',D')$ of strings $A$ and $B$, and $C$ and $D$, respectively.\\
\noindent OUTPUT: Alignments $(A'',B'')$ and $(C'',D'')$ resulting from a series of recombinations to $(A',B')$ and $(C',D')$, respectively, maximizing
        \[\as(\spell(A''),\spell(C'')) + \as(\spell(B''),\spell(D'')).\] 
\end{framed}

Notice that even if $(A',B')$ and $(C',D')$ represent diploid chromosomes of two siblings, their recombination patterns are independent, so the formulation gives a way to measure just the sequence similarity not penalizing on the natural recombination phenomenon. Other applications are in comparing haplotyping results between two tools even on the same data; haplotyping algorithms usually return blocks of correctly phased variants, but not on full chromosomes.  

The Recombination-Oblivious Diploid Alignment Problem was defined in \cite{MV14}, but its complexity was left open. Related notions on comparing two sequences to a third with edit distance and crossover were studied in \cite{KG98}; polynomial dynamic programming algorithms were derived, and extensions to multiple sequences were shown NP-hard. These notions and dynamic programming algorithms were further generalized in \cite{MV14,MV15}. The case where the third sequence is an alignment, and one needs find a recombination of it to minimize the sum of edit distances of the resulting haplotypes to the two other input sequences, is still polynomial time solvable \cite{MV14,MV15}. Moreover, these algorithms  extend for the case where all three inputs are alignments, but only one of them needs to be covered by the paths through the two other alignments \cite{MV14,MV15}. Complexity changes when one needs to cover more than one alignment: We have already seen the analogous result on labelled DAGs, but in the following we state this same result in the refined model of diploid alignments, which is sligthly more specific.

\ignore{Before proceeding to the NP-hardness proof, let us consider the easier direction to reduce diploid alignment to covering alignment of labelled DAGs.  Every Recombination-Oblivious Diploid Alignment Problem instance can be encoded as two $\Sigma_\varepsilon$-DAGs: For an alignment $(A',B')$, create nodes $v^A_i$ and $v^B_i$, for $1\leq i \leq |A'|$, with $\ell(v^A_i)=\varepsilon$ if $A'[i]=\gap$ otherwise $\ell(v^A_i)=A'[i]$, and with $\ell(v^B_i)=\varepsilon$ if $B'[i]=\gap$ otherwise $\ell(v^B_i)=B'[i]$. Then create arcs $(v^A_i,v^A_{i+1}), (v^A_i,v^B_{i+1}), (v^B_i,v^B_{i+1}), (v^B_i,v^A_{i+1})$ for $1\leq i < |A'|$. Finally, add source $s$ with label $\ell(s)=\varepsilon$ connecting it to 
nodes $v^A_1$ and $v^B_1$ with arcs $(s,v^A_1)$ and $(s,v^B_1)$, and add target $t$ with label $\ell(t)=\varepsilon$ connecting it from 
nodes $v^A_{|A'|}$ and $v^B_{|A'|}$ with arcs $(v^A_{|A'|},t)$ and $(v^B_1,t)$. After encoding both inputs of the Recombination-Oblivious Diploid Alignment Problem this way, as separate $\Sigma_\varepsilon$-DAGs ,the outputs of {\sc PC-Min-ED-$\Sigma_\varepsilon$} can be cast as recombinations of the pair-wise alignments in an obvious way. To the other direction the connection is more elaborate and will be detailed next.   }

\begin{Teo}
\label{cor:dalignhard}
The Recombination-Oblivious Diploid Alignment Problem is NP-hard when alphabet size is $3$ or larger.
\end{Teo}

\noindent \emph{Proof.} We use alphabet $\Sigma=\{0,1,\mathtt{t}\}$ and fix the scoring scheme $s(r,c)$ as follows:\\
\begin{center}
\begin{tabular}{l|c|c|c|c}
     $s$        &    $0$        &    $1$         &  \texttt{t}  & \gap\\
\hline
$0$            &    $0$        &   $-1$          &   $-\infty$   & $-1$\\
$1$            &   $-1$        &   $0$           &   $-\infty$   & $-1$\\
\texttt{t}   & $-\infty$   & $-\infty$    &  $0$           & $-\infty$\\        
\gap          &   $-1$        &   $-1$         & $-\infty$   & $0$ \\
\end{tabular}\\
\end{center}
Here $s(r,c)$ is given by the value at row $r$ and column $c$.

The DAGs $\mathcal{A}$ and $\mathcal{B}$ from Section~\ref{sect:starhardness} can be cast as pair-wise alignments by taking each \emph{column} of the gadgets (as in the visualization) and considering the following cases: 
\begin{enumerate}
\item[(i)] if a column contains two nodes $v$ and $w$ with the same label $T= \ell(v)=\ell(w)$, construct a block $(\mathtt{t},\mathtt{t})$ in the alignment; 
\item[(ii)] if a column contains two nodes $v$ and $w$ with one of them, say $w$, with label $\ell(w)=\varepsilon$ construct a block $(\ell(v),\gap)$ in the alignment; 
\ignore{\item[(iii)] if a column contains only one node $v$ labeled $\ell(v)=0^D$, construct a block $(\mathtt{d},\gap)$ in the alignment; }
\item[(iii)] if a column contains only one node $v$ labeled $\ell(v)= S_0$ or $\ell(v)= S_1$, construct a block $(\ell(v),\ell(v))$ in the alignment; 
\item[(iv)] if a column contains only one node $v$ labeled $\ell(v)=T$, construct a block $(\mathtt{t},\gap)$ in the alignment. 
\end{enumerate} 
Concatenating these blocks from left to right creates pair-wise alignments $(A',B')$ and $(C',D')$ corresponding to DAGs $\mathcal{A}$ and $\mathcal{B}$, respectively. The resulting pair-wise alignment $(A',B')$ is shown in Figure~\ref{fig:alignmentA_1}

\begin{figure}[htbp]
  \begin{center}
  \includegraphics[width=1\columnwidth]{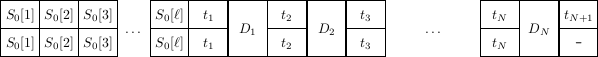}
  \end{center}
  \caption{
    High-level structure of pair-wise alignment $(A',B')$. The contents of blocks $D_i$ are shown in Figure~\ref{fig:alignmentA_2}.
    All the $t_i$ corresponds to the symbol $t$; the subscripts are to shown the relationship with the graph $\mathcal{A}$.
  }
    \label{fig:alignmentA_1}
\end{figure}

\begin{figure}[htbp]
\centering
  \includegraphics[width=0.8\columnwidth]{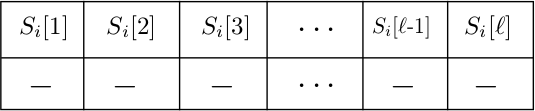}
  \caption{Pair-wise alignment version of gadget $D_i$.\label{fig:alignmentA_2}}
    
\end{figure}

Consider a series of recombinations of $(A',B')$ into $(A'',B'')$ and a series of recombinations of $(C',D')$ into $(C'',D'')$, that maximize \[\as(\spell(A''),\spell(C''))+\as(\spell(B''),\spell(D'')),\] under the scoring function define above. We claim that \[-(\as(\spell(A''),\spell(C''))+\as(\spell(B''),\spell(D'')))-2\ell\] equals the optimal solution of covering alignment of DAGs $\mathcal{A}$ and $\mathcal{B}$ with the unit cost edit distance. For the reverse implication, one can map the alignments of red and green paths in the proof of Lemma~\ref{lem:NPCstar_easy} to form alignments of $(\spell(A''),\spell(C''))$ and $(\spell(B''),\spell(D''))$, where $S_0$ and $S_1$ are deleted from the head and tail, respectively, of the alignment corresponding to red paths. Alignment corresponding to that of green paths is identical, with respect to the mapping of nodes to symbols derived above. The claimed equality then follows considering the definition of the scores. For the forward implication, since all tab symbols $\mathtt{t}$ need to align in their occurrence order as in the proof of Lemma~\ref{lem:NPCstar_hard}, and since recombinations inside the head $(S_0,S_0)$ and tail $(S_1,S_1)$ of $(A',B')$ and $(C',D')$, respectively, are non-effective, an optimal series of recombinations is in one-to-one correspondence with the covering red and green paths as in the reverse implication.      

Hence, solving Recombination-Oblivious Diploid Alignment Problem on these instances solves the {\sc PC-Min-ED-$\Sigma'$} on $\Sigma_\varepsilon$-DAGs and due to Lemmas~\ref{lem:NPCstar_easy}~and~\ref{lem:NPCstar_hard} would solve the LCS problem. 
\qed

\section{NP-hardness without empty labels\label{sect:genhardness}}

Recall that problem {\sc PC-Min-ED-$\Sigma$} differs from {\sc PC-Min-ED-$\Sigma_\varepsilon$} in that each node of the graph needs to have a non-empty label. Our plan is to modify as little as possible
the construction offered in Section~\ref{sect:starhardness} and for this purpose we consider \emph{indel} edit distance, rather than unit cost edit distance: In this scoring scheme, substitutions have cost $\infty$, indentities have cost zero, and insertions and deletions have cost $1$. We also increase the alphabet from binary to size $4$ by replacing all empty labels in $\mathcal{A}$ with a new symbol $\mathtt{a}$, and all empty labels in $\mathcal{B}$ with a new symbol $\mathtt{b}$. Obviously, any pair of covering alignments needs to have gap symbols aligned with each $\mathtt{a}$ and each $\mathtt{b}$. This cost is invariant and does not alter the relative order of alignments when sorted by their overall cost. One can thus modify systematically proofs of Lemmas~\ref{lem:NPCstar_easy}~and~\ref{lem:NPCstar_hard} taking this invariant into account to conclude that {\sc PC-Min-ED-$\Sigma$} is NP-hard with this scoring scheme and alphabet size $4$.  An analogous modification to the proof of Theorem~\ref{cor:dalignhard} gives that Recombination-Oblivious Diploid Alignment Problem is NP-hard when its input contains no gap symbols and the alphabet size is $5$. The Appendix demonstrates a subtle phase transition for this latter problem variant, as a slight relaxation of it is in P. For the interested reader, the last paragraph of the Appendix revisits the reduction to show that the derived partially covering relaxation indeed achieves better edit distance for the reduction instances than the NP-hard solution; such pair of alignments has quite a specific structure and gives also more insight to the reduction itself.  

\section{Discussion\label{sec:discussion}}

It is evident that the reductions given here generalize to scoring functions beyond those considered here. We leave such development for future work. Notice that similar fine-grained complexity analysis has been conducted for the LCS problem \cite{bonizzoni}. 

The reduction technique developed here is likely to find other applications in the area of computational pan-genomics \cite{Maretal16}. 
A direct consequence is that comparing two pan-genome representations is NP-hard, if accepting the notion of covering alignment developed here as the basis. Namely, the general optimization problem is to find minimun number $k$ of paths to cover each of the two input DAGs, and then among such covers one that maximizes the sum of $k$ global alignment scores among the $k!$ pairings. Since case $k=2$ is NP-hard, case $k=1$ is the classical quadratic time solvable sequence alignment problem, and our reduction works on binary alphabet, we have identified a phase transition for this notion of similarity (see also the Appendix for an even closer phase transition).  As the labeled DAG representation loses the connectivity information on variations in the pan-genomic setting, one could resort back to a multiple alignment of haplotypes, and adjust the notion of recombinations to allow only limited number of those. This notion allows parameterized complexity analysis. Indeed, let us consider the Recombination-Oblivious Diploid Alignment Problem from this angle. Given a limit $r$ for the number of recombinations in one alignment, a naive algorithm is to consider all $\binom{n}{r}\leq n^r$ recombinations on both input alignments and then compute the global alignment of the resulting haplotype pairs. This results into an $O(n^{2r+2})$ time algorithm. One can speed this up to $O(n^{r+3})$ by considering all recombinations only in one input alignment and then resorting to the algorithm in \cite{MV14}. We believe there is room for further work around the parameterized tractability border of this problem. For the general covering alignment problem on DAGs a plausible direction is to look for approximation algorithms or approximation hardness.

\section*{Acknowledgements} 

We thank the anonymous reviewers for good suggestions to improve the presentation.
This work was supported in part by the Academy of Finland (grants 309048, 274977).

\bibliographystyle{abbrv}
\bibliography{paper}

\begin{thebibliography}{10}

\bibitem{AroraLMSS98}
S.~Arora, C.~Lund, R.~Motwani, M.~Sudan, and M.~Szegedy.
\newblock Proof verification and the hardness of approximation problems.
\newblock {\em J. {ACM}}, 45(3):501--555, 1998.

\bibitem{BI15}
A.~Backurs and P.~Indyk.
\newblock Edit distance cannot be computed in strongly subquadratic time
  (unless seth is false).
\newblock In {\em Proceedings of the Forty-Seventh Annual ACM on Symposium on
  Theory of Computing}, STOC '15, pages 51--58. ACM, 2015.

\bibitem{BP07}
J.~Berstel and D.~Perrin.
\newblock The origins of combinatorics on words.
\newblock {\em Eur. J. Comb.}, 28(3):996--1022, 2007.

\bibitem{bonizzoni}
P.~Bonizzoni and G.~D. Vedova.
\newblock The complexity of multiple sequence alignment with sp-score that is a
  metric.
\newblock {\em Theor. Comput. Sci.}, 259((1-2)):63--79, 2001.

\bibitem{deB46}
N.~G. De~Bruijn.
\newblock A combinatorial problem.
\newblock {\em Koninklijke Nederlandse Akademie v. Wetenschappen}, 49:758--764,
  1946.

\bibitem{DEKM98}
R.~Durbin, S.~R. Eddy, A.~Krogh, and G.~Mitchison.
\newblock {\em Biological Sequence Analysis: Probabilistic Models of Proteins
  and Nucleic Acids}.
\newblock Cambridge University Press, 1998.

\bibitem{FM05}
P.~Ferragina and G.~Manzini.
\newblock Indexing compressed text.
\newblock {\em J. {ACM}}, 52(4):552--581, 2005.

\bibitem{Goo46}
I.~J. Good.
\newblock Normal recurring decimals.
\newblock {\em J. London Math. Soc.}, 21:167--169, 1946.

\bibitem{GV05}
R.~Grossi and J.~S. Vitter.
\newblock Compressed suffix arrays and suffix trees with applications to text
  indexing and string matching.
\newblock {\em {SIAM} J. Comput.}, 35(2):378--407, 2005.

\bibitem{Gus97}
D.~Gusfield.
\newblock {\em Algorithms on Strings, Trees and Sequences: Computer Science and
  Computational Biology}.
\newblock Cambridge University Press, 1997.

\bibitem{KG98}
J.~Kececioglu and D.~Gusfield.
\newblock Reconstructing a history of recombinations from a set of sequences.
\newblock {\em Discrete Applied Mathematics}, 88(1):239 -- 260, 1998.

\bibitem{bowtie}
B.~Langmead, C.~Trapnell, M.~Pop, and S.~L. Salzberg.
\newblock Ultrafast and memory-efficient alignment of short dna sequences to
  the human genome.
\newblock {\em Genome Biology}, 10(3):R25, 2009.

\bibitem{bwa}
H.~Li and R.~Durbin.
\newblock Fast and accurate short read alignment with burrows--wheeler
  transform.
\newblock {\em Bioinformatics}, 25(14):1754--1760, July 2009.

\bibitem{Mai78}
D.~Maier.
\newblock The complexity of some problems on subsequences and supersequences.
\newblock {\em J. ACM}, 25(2):322--336, Apr. 1978.

\bibitem{MBCT15}
V.~M{\"{a}}kinen, D.~Belazzougui, F.~Cunial, and A.~I. Tomescu.
\newblock {\em Genome-Scale Algorithm Design: Biological Sequence Analysis in
  the Era of High-Throughput Sequencing}.
\newblock Cambridge University Press, 2015.

\bibitem{MV14}
V.~M{\"a}kinen and D.~Valenzuela.
\newblock Recombination-aware alignment of diploid individuals.
\newblock {\em BMC Genomics}, 15(Suppl 6):S15, 2014.

\bibitem{MV15}
V.~M{\"{a}}kinen and D.~Valenzuela.
\newblock Diploid alignments and haplotyping.
\newblock In {\em 11th International Symposium on Bioinformatics Research and
  Applications (ISBRA 2015)}, volume 9096 of {\em LNCS}, pages 235--246.
  Springer, 2015.

\bibitem{Maretal16}
T.~Marschall et~al.
\newblock Computational pan-genomics: status, promises and challenges.
\newblock {\em Briefings in Bioinformatics}, 19(1):118--135, 2018.

\bibitem{Rizzi2013}
R.~Rizzi and S.~Vialette.
\newblock {\em On Recognizing Words That Are Squares for the Shuffle Product},
  pages 235--245.
\newblock Springer Berlin Heidelberg, 2013.

\bibitem{FSM94}
F.~Sainte-Marie.
\newblock Solution to question nr. 48l.
\newblock {\em Interm{\'e}diaire des Math{\'e}maticiens}, 1:107--110, 1894.

\end{thebibliography}

\begin{IEEEbiography}[]{Romeo Rizzi} received his Ph.D.~from the Department of Mathematics of Padova University, Italy, in 1997.
He held researcher positions at centers like CWI (Amsterdam, Netherlands), BRICS (Aarhus, Denmark) and IRST (Trento, Italy), University of Trento and University of Udine, Italy.
Since 2011, he has been an associate professor at the University of Verona, Italy. He has a background in Operations Research and his
main interests are in Combinatorial Optimization and Algorithms.
He is an Area Editor of 4OR. He published a hundred research papers in a broad range of scientific journals in the areas of Discrete Mathematics, Combinatorics, and Algorithms. He also authored several papers in conference proceedings, and invited chapters. Since 2004,
he has intensively acted as a trainer of the Italian team for the iOi.
\end{IEEEbiography}

\begin{IEEEbiography}[]{Massimo Cairo} is currently a PhD student in Mathematics jointly at the University of Verona and the University of Trento, Italy, working with professor Romeo Rizzi. His main research focus is in Theoretical Computer Science and Algorithms, and he has refereed publications in the fields of Computational Graph Theory, Computational Biology and Automated Temporal Planning.

\end{IEEEbiography}

\begin{IEEEbiography}[]{Veli
M\"akinen} finished his PhD studies in Computer Science in 2003 at the University of Helsinki, Finland. He worked as a Postdoctoral Researcher (2004-2005) at Bielefeld University, Germany, and then back in Helsinki as Postdoctoral Research Fellow (2005-2007) and Academy Research Fellow (2007-2010). In 2010, he was appointed as a Professor in Computer Science at the University of Helsinki. Veli M\"akinen now heads the Genome-scale algorithmics research group. His research interests are in compressed text indexing and in algorithmic bioinformatics. He has some 100 publications on these topics, including a co-authored text book.
\end{IEEEbiography}

\begin{IEEEbiography}[]{Alexandru I. Tomescu}
obtained his PhD in computer science from the University of Udine, Italy, in 2012. After spending six months at the Technical University Berlin, Germany, he joined the Genome-scale algorithmics group at the University of Helsinki, Finland, where he currently holds an Academy of Finland Postdoctoral Researcher Fellowship.
\end{IEEEbiography}

\begin{IEEEbiography}[]{Daniel Valenzuela}
obtained his MSc in Computer Science from University of Chile in 2013 and his PhD in Computer Science from University of Helsinki in 2017.
Currently he is a post-doctoral researcher at University of Helsinki, as a member of the Genome-scale algorithmics research group.
His research interests include string algorithms, compressed data structures and their applications in bioinformatics.
\end{IEEEbiography}

\vfill

\end{document}